\theoremstyle{definition}
\newtheorem{defi}{Definition}[section]
\newtheorem{rmk}{Remark}[section]
\newtheorem{prop}{Proposition}[section]
\newtheorem{thm}{Theorem}[section]
\newtheorem{eg}{Example}[section]
\newtheorem{ax}{Axiom}
\newtheorem{lemma}{Lemma}[section]
\newtheoremstyle{mystyle}
{3pt}
{3pt}
{\upshape}
{}
{}
{:}
{.5em}
{}
\theoremstyle{mystyle}
\newenvironment{keywords}
{\par\noindent\textbf{Keywords:}}
{\par}
\title{Spin Statistics and Field Equations for any Spin}
\author{Feng Zixuan}
\begin{document}
\maketitle
\tableofcontents

\begin{abstract}
    In this article we prove spin statistics theorem for arbitrary massive $(A,B)$ field in a representation theoretic manner. General Gamma matrices are introduced, and explicit forms for low spin are calculated. Spin sums and twisted spin sums are introduced to prove spin statistics and derive field equations respectively, and their relations are discussed. Klein Gordan equation is just the condition on the 4-momentum. Dirac equation (or massive Weyl equations) and Proca equation are shown to be examples of our general field equation. 
\end{abstract}
\begin{keywords}
Quantum field theory; Spin statistics; Gamma matrices; Dirac equation; Proca equation;
\end{keywords}

\clearpage

\section{Introduction}

In this article we prove the spin statistics theorem in a representation theoretic manner for arbitrary field spin $(A,B)$ for massive field. Also we derive general field equation for massive field, and shows that the Dirac equation and Proca equation are examples for it.

The logic chain of this article follows Weinberg's path (\cite{weinberg1964feynman}) of deriving quantum field theory. We first have a unitary representation of Poincare group on the state space (Fock space), then we define creation and annilation operators explicitly, and then use them to construct free quantum fields. In this approach the field equation does not serve as an equation that characterizes the field uniquely, but is a property that the field possesses. So field equation is not of great importance from certain point of view. 

\begin{equation}
    \text{Fock space representation} \longrightarrow
    \text{Creation and annilation} \longrightarrow
    \text{Free quantum field} \longrightarrow
    \text{Field equation}
\end{equation}

First we recall the construction for free quantum field. 
We define annihilation field $\psi_{l}^{+}(x)$ and creation field $\psi_l^{-}(x)$:
\begin{equation}
    \begin{aligned}
    & \psi_{l}^{+}(x)=\sum_{\sigma n} \int d^3 p u_{l}(x ; \mathbf{p}, \sigma, n) a(\mathbf{p}, \sigma, n), \\
    & \psi_{l}^{-}(x)=\sum_{\sigma n} \int d^3 p v_{l}(x ; \mathbf{p}, \sigma, n) a^{\dagger}(\mathbf{p}, \sigma, n)
    \end{aligned}
\end{equation}
We hope it to satisfy the 
Lorentz transformation of field:
\footnote{Our convention is different from that of \cite{weinberg2002quantum}.}
\begin{ax}(Lorentz transformation of field)

    We have a finite dimensional representation $D$ of the universal cover $SL(2,\mathbb{C})$ of the Lorentz group such that the 'Lorentz transformation of the field' are satisfied:
    \begin{equation}
    \begin{aligned}
        U_0(\Lambda, a)^{-1} \psi_{l}^{+}(\lambda(\Lambda) x+a) U_0(\Lambda, a)
        = D_{l \bar{l}}(\Lambda) \psi_{\bar{l}}^{+}(x)\\
        U_0(\Lambda, a)^{-1} \psi_{l}^{-}(\lambda(\Lambda) x+a) U_0(\Lambda, a)
        = D_{l \bar{l}}(\Lambda) \psi_{\bar{l}}^{-}(x)\\
        \forall \Lambda\in SL(2,\mathbb{C})
    \end{aligned}
    \end{equation}
\end{ax}

In order for Axiom 1 to satisfy, it is equivalent that
\begin{equation}\label{B}
    \begin{aligned}
        u(x; \mathbf{q})=(2\pi)^{-3 / 2} e^{ip\cdot x} u(\mathbf{q})
        \\
        v(x; \mathbf{q})=(2\pi)^{-3 / 2} e^{-ip\cdot x} v(\mathbf{q})
    \end{aligned}
\end{equation}
and:
\begin{equation}\label{B}
    \begin{aligned}
        u(\mathbf{q})=(m / q^0)^{1 / 2} D(L(q)) u(\mathbf{0})
        \\
        v(\mathbf{q})=(m / q^0)^{1 / 2} D(L(q)) v(\mathbf{0})
    \end{aligned}
\end{equation}
and:
\begin{equation}
    \begin{aligned}\label{C}
        u(\mathbf{0}) D^{(j_n)}(W)
        =D(W) u(\mathbf{0})
        \\
        v(\mathbf{0}) D^{(j_n)^*}(W)
        =D(W) v(\mathbf{0})
    \end{aligned}
\end{equation}
$\forall W\in \mathcal{W}$, 
where we have used the convention: $u(q) = u(\mathbf{q})$ denotes the matrix $\{u_l(\mathbf{q},\sigma)\}_{l|\sigma}$, and the multiplication is the matrix multiplication. 

Considering a irreducible field representation of spin $(A,B)$, we write

\begin{equation}
    \begin{aligned}
    \psi^{AB}_{a b}(x)= & (2 \pi)^{-3 / 2} \sum_\sigma \int d^3 p
    [e^{i p \cdot x} \kappa^{AB}  u^{AB}_{a b}(\mathbf{p}, \sigma) a(\mathbf{p}, \sigma)
    + e^{-i p \cdot x} \lambda^{AB} v^{AB}_{a b}(\mathbf{p}, \sigma) a^{c \dagger}(\mathbf{p}, \sigma)]
    \end{aligned}
\end{equation}
with $\kappa$ and $\lambda$ arbitrary constants waiting for determination.

The next property we want the quantum field to possess is the microscopic causality:

\begin{ax}(Microscopic causality)

    We should have
    \begin{equation}
        [\psi^{AB}_{a b}(x), \psi^{CD\dagger}_{cd}(y)]_{\mp} = 0, \quad (x-y)^2>0
    \end{equation}
\end{ax}

This requirement leads to the so called 'spin statistics theorem', which is discussed in Section 3. 
The central object here is the so called 'spin sum':
\begin{equation}
    (2 p^0)^{-1} \pi^{AB,CD}(\mathbf{p})
    \equiv  u^{AB}(\mathbf{p}) u^{CD\dagger}(\mathbf{p})
    = v^{AB}(\mathbf{p}) v^{CD\dagger}(\mathbf{p})
\end{equation}
and we can show that it may be expressed as a covariant polynomial of $p^\mu$ using generalized gamma matrices, which we will discuss in Section 2. 
As a result we can fix a normalization of $\kappa,\lambda$:
\begin{equation}
    \begin{aligned}
    \psi^{AB}_{a b}(x)= & (2 \pi)^{-3 / 2} \sum_\sigma \int d^3 p
    [e^{i p \cdot x}  u^{AB}_{a b}(\mathbf{p}, \sigma) a(\mathbf{p}, \sigma)
    + e^{-i p \cdot x} (-)^{2B} v^{AB}_{a b}(\mathbf{p}, \sigma) a^{c \dagger}(\mathbf{p}, \sigma)]
    \end{aligned}
\end{equation}
where the factor $(-)^{2B}$ is essential for both spin statistics and the existence for a field equation where we want to combine the two equations for positive frequency part and the negative frequency part, as discussed in Section 4. In writing down a field equation, we use a modified version of spin sum, called 'twisted spin sum' in this article.



\section{Gamma matrices}

\subsection{General spin}

We can prove that for any half integers $A,B,C,D,K$ satisfying
\begin{equation}\label{ABCDK}
    \max\{|A-D|,|B-C|\} \le K \le \min\{A+D,B+C\}
\end{equation}
with $2K,2A+2D,2B+2C$ having the same parity,
\footnote{This parity condition will be implicit when we write down the condition \eqref{ABCDK} afterwards. }
there exists a set of non-zero $(2A+1)(2B+1)\times (2C+1)(2D+1)$ matrices
\begin{equation}
    ^{ABCD}_{\quad K}T^{\mu_1\mu_2 \cdots \mu_{2K}}, 
     \quad \mu_1,\mu_2, \cdots ,\mu_{2K} = 0,1,2,3
\end{equation}
with the properties:
\begin{itemize}
    \item 
    $T$ is symmetric in $\mu$'s.
    \item 
    $T$ is traceless in $\mu$'s, i.e.,
    \begin{equation}
        g_{\mu_1\mu_2}T^{\mu_1\mu_2 \cdots \mu_{2K}}=0
    \end{equation}
    \item
    $T$ is a tensor in $\mu$'s, i.e.,
    \begin{equation}\label{Transformation of T matrices}
        D^{AB}(\Lambda) T^{\mu_1\mu_2 \cdots \mu_{2K}} D^{CD}(\Lambda)^\dagger
        = \Lambda_{\nu_1}{}^{\mu_1} \Lambda_{\nu_2}{}^{\mu_2} \cdots \Lambda_{\nu_{2K}}{}^{\mu_{2K}}
        T^{\nu_1\nu_2 \cdots \nu_{2K}}
    \end{equation}
    where the ordinary matrix multiplication is understood on the LHS. 
\end{itemize}

Consider the vector space $V$ consisting of all complex $(2A+1)(2B+1)\times(2C+1)(2D+1)$-matrices. They furnish a representation of $SL(2,\mathbb{C})$ by
\begin{equation}
    M \mapsto D^{AB}(\Lambda) M D^{CD\dagger}(\Lambda)
\end{equation}
One can see that this is isomorphic to the $(A,B)\otimes(D,C)$ representation, which is because the complex conjugate of $D^{CD}$ is isomorphic (but not directly equal) to $D^{DC}$. 
And we have
\footnote{From now on we use the convention.}
\begin{equation}
\begin{aligned}
    (A,B)\otimes(D,C) = (A\otimes D , B\otimes C ) &= (\bigoplus_{|A-D|\le K_1\le A+D} K_1 , \bigoplus_{|B-C|\le K_2\le B+C} K_2 )\\
    &= \bigoplus_{|A-D|\le K_1\le A+D, |B-C|\le K_2\le B+C} (K_1 , K_2 )
\end{aligned}
\end{equation}
So it contains a $(K,K)$-subrep. By the fact
\footnote{See \cite{weinberg1964feynman}}
that $(K,K)$-representation of $SL(2,\mathbb{C})$ consists of all symmetric traceless tensors of rank $2K$, so we define $T$'s to be the standard basis in this description and we are done.

\subsection{Examples}

Pauli matrices $\sigma^\mu$ are the $T$ matrices for $(\frac{1}{2},0,\frac{1}{2},0)$, i.e. $\sigma^\mu = ^{1/2,0,1/2,0}_{\quad 1/2} T^\mu$, where
\begin{equation}
    \sigma^0 = I = \begin{pmatrix}
    1 & 0\\
    0 & 1
    \end{pmatrix}
    \quad
    \sigma^1 = X = \begin{pmatrix}
    0 & 1\\
    1 & 0
    \end{pmatrix}
    \quad
    \sigma^2 = Y = \begin{pmatrix}
    0 & -i\\
    i & 0
    \end{pmatrix}
    \quad
    \sigma^3 = Z = \begin{pmatrix}
    1 & 0\\
    0 & -1
    \end{pmatrix}
\end{equation}

Indeed we see in the decomposition $(\frac{1}{2},0)\otimes (0,\frac{1}{2})=(\frac{1}{2},\frac{1}{2})$ there is no other subreps, so we only need to show they obey the property
\begin{equation}
    D^{1/2,0}(\Lambda) \sigma^\mu D^{1/2,0\dagger}(\Lambda) = \Lambda_\nu{}^\mu \sigma^\nu
\end{equation}
or its Lie algebra level (where $J^{\mu\nu}$ denotes the image of Lie algebra basis under representation):
\begin{equation}
    i [J^{\mu\nu}, \sigma^\rho ] = -g^{\rho\mu} \sigma^\nu + g^{\rho\nu}\sigma^\mu
\end{equation}
This is by the definition of $(\frac{1}{2},0)$ representation:
\begin{equation}
    \begin{aligned}
        J_1 &\mapsto \frac{1}{2}X\\
        J_2 &\mapsto \frac{1}{2}Y\\
        J_3 &\mapsto \frac{1}{2}Z\\
        K_1 &\mapsto -\frac{i}{2}X\\
        K_2 &\mapsto -\frac{i}{2}Y\\
        K_3 &\mapsto -\frac{i}{2}Z
    \end{aligned}
\end{equation}

Similarly we can show that $\bar{\sigma}^\mu \equiv (I, -\mathbf{\sigma})$ are the $T$ matrices for $(0,\frac{1}{2},0,\frac{1}{2})$, i.e. $\bar{\sigma}^\mu = ^{0,1/2,0,1/2}_{\quad 1/2} T^\mu$.


\section{Spin statistics for massive particles}

The results in this subsection apply only when the field representation is irreducible and the particle is massive.

Let's compute $[\psi^{AB}_{a b}(x), \psi^{CD\dagger}_{c d}(y)]_{\mp}$. Firstly we have the equality 
\begin{equation}
    \begin{aligned}
    [\psi^{AB}_{a b}(x), \psi^{CD\dagger}_{c d}(y)]_{\mp}
    = & (2 \pi)^{-3 / 2} \int d^3 p(2 p^0)^{-1} 
    \pi^{AB,CD}_{a b, c d}(\mathbf{p}) \\
    & \times[\kappa^{AB} \kappa^{CD*} e^{i p \cdot(x-y)} \mp \lambda^{AB} \lambda^{CD*} e^{-i p \cdot(x-y)}]
    \end{aligned}
\end{equation}

where 

\begin{defi}(Spin sum)

    Define the spin sum as
    \begin{equation}
    (2 p^0)^{-1} \pi^{AB,CD}(\mathbf{p})
    \equiv  u^{AB}(\mathbf{p}) u^{CD\dagger}(\mathbf{p})
    = v^{AB}(\mathbf{p}) v^{CD\dagger}(\mathbf{p})
    \end{equation}
    It is a $(2A+1)(2B+1)\times(2C+1)(2D+1)$-matrix. 
\end{defi}

\begin{rmk}
    We use the convention that $u^{AB}(p)$ is a matrix $(u_{ab}^{AB}(p,\sigma))_{ab,\sigma}$, having $(2A+1)(2B+1)$ rows and $2j+1$ columns. The above is understood as a matrix multiplication. Written explicitly in components, it reads
    \begin{equation}
    (2 p^0)^{-1} \pi^{AB,CD}_{a b, c d}(\mathbf{p})
    \equiv \sum_\sigma u^{AB}_{a b}(\mathbf{p}, \sigma) u^{CD*}_{cd}(\mathbf{p}, \sigma)
    =\sum_\sigma v^{AB}_{a b}(\mathbf{p}, \sigma) v^{CD*}_{cd}(\mathbf{p}, \sigma)
    \end{equation}
    But for convenience, we adapt the more concise notation involving matrix multiplication. 
\end{rmk}

We immediately see its Lorentz transformation property:
\begin{prop}
    \begin{equation}\label{Transformation of Spin Sum}
        \pi^{AB,CD}(\Lambda p)
        =
        D^{AB}(\Lambda) \pi^{AB,CD}(p) D^{CD}(\Lambda)^\dagger
    \end{equation}
\end{prop}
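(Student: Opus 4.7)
The plan is to unpack the explicit form of $u^{AB}(\mathbf{p})$ from equation (5) and push the $\Lambda$-action through using the Wigner construction for the little group. Concretely, I would begin with $u^{AB}(\mathbf{p}) = (m/p^0)^{1/2} D^{AB}(L(p)) u^{AB}(\mathbf{0})$, multiply on the left by $D^{AB}(\Lambda)$, and invoke the decomposition $\Lambda L(p) = L(\Lambda p)\, W(\Lambda,p)$ with $W(\Lambda,p) := L(\Lambda p)^{-1} \Lambda L(p)$ lying in the little group $\mathcal{W}$. Because the particle is massive, $\mathcal{W}$ is compact (the double cover of $SO(3)$), so every $D^{(j_n)}(W)$ is unitary.

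Applying $D^{AB}$ to the Wigner identity and then using equation (6) to move the factor $D^{AB}(W)$ past $u^{AB}(\mathbf{0})$, one gets an identity of the form
\[
D^{AB}(\Lambda)\, u^{AB}(\mathbf{p}) = \sqrt{(\Lambda p)^0/p^0}\; u^{AB}(\Lambda\mathbf{p})\, D^{(j_n)}(W(\Lambda,p)),
\]
and the same manipulation applied to the $(C,D)$ representation, then Hermitian conjugated, gives
\[
u^{CD\dagger}(\mathbf{p})\, D^{CD}(\Lambda)^\dagger = \sqrt{(\Lambda p)^0/p^0}\; D^{(j_n)}(W(\Lambda,p))^\dagger\, u^{CD\dagger}(\Lambda\mathbf{p}).
\]
The Wigner rotation appearing in both displays is identical because it depends only on $\Lambda$ and $p$, not on the field representation. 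Multiplying the two and applying the unitarity $D^{(j_n)}(W)\, D^{(j_n)}(W)^\dagger = I$ collapses the inner little-group factors, and comparing with the definition $uu^\dagger = (2p^0)^{-1}\pi$ on both sides of the resulting identity yields the claim once the matching $(2p^0)^{-1}$ prefactors are cancelled.

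The only step that is not pure bookkeeping is the unitarity of $D^{(j_n)}(W)$: this is exactly where the massiveness assumption enters, since compactness of the little group is what forces its finite-dimensional irreducible representations to be unitary. For a massless field the same strategy would have to be modified because the little group is no longer compact, so this is the place to flag as representation-theoretically essential even though the rest of the argument is routine Wigner-rotation algebra.
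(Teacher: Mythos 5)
Your proposal is correct and follows essentially the same route as the paper: both push $D^{AB}(\Lambda)$ and $D^{CD}(\Lambda)^\dagger$ through $u^{AB}(\mathbf{p})\,u^{CD\dagger}(\mathbf{p})$ via the Wigner decomposition $\Lambda L(p)=L(\Lambda p)W(\Lambda,p)$ together with the intertwining relation \eqref{C}, and then cancel the little-group factors by unitarity of $D^{(j_n)}(W)$, with the energy ratios $(\Lambda p)^0/p^0$ absorbing the normalization. The paper's proof simply states the resulting identity $D^{AB}(\Lambda)u^{AB}(p)u^{CD\dagger}(p)D^{CD\dagger}(\Lambda)=\bigl((\Lambda p)^0/p^0\bigr)\,u^{AB}(\Lambda p)D^{j}(W)D^{j\dagger}(W)u^{CD\dagger}(\Lambda p)$ in one line, whereas you derive it explicitly; your remark that unitarity of the little-group representation is where massiveness enters is the right point to flag.
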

\begin{proof}
\begin{equation}
\begin{aligned}
    D^{AB}(\Lambda) \pi^{AB,CD}(p) D^{CD\dagger}(\Lambda) 
    &= 2p^0 D^{AB}(\Lambda) u^{AB}(p) u^{CD\dagger}(p) D^{CD\dagger}(\Lambda) \\
    &= 2(\Lambda p)^0 u^{AB}(\Lambda p) D^j(W(\Lambda,p)) D^{j\dagger}(W(\Lambda,p)) u^{CD\dagger}(\Lambda p)\\
    &= 2(\Lambda p)^0 u^{AB}(\Lambda p) u^{CD\dagger}(\Lambda p)\\
    &= \pi^{AB,CD}(\Lambda p)          
\end{aligned}
\end{equation}
\end{proof}

The major part of the proof of spin statistics is that we can express the spin sum as a polynomial with a good parity property, and it is this polynomial's parity that determine the boson/fermion of the particle. The parity is used in absorbing the minus sign in $e^{-ip\cdot x}$ of $\psi^{(-)}$.
\footnote{Until not we do not assume the particle to be massive. }

\begin{thm}
\footnote{Weinberg in \cite{weinberg2002quantum} proved the special case when $\mathbf{p} = (0,0,p^3)$. But it does not directly imply the general case.}

    In the massive case there is a 4-variables polynomial $P$ of coefficients $(2A+1)(2B+1)\times(2C+1)(2D+1)$-matrices, such that on the mass shell its value always equal the spin sum
    \begin{equation}
        \pi^{AB,CD}  (\mathbf{p},\sqrt{\mathbf{p}^2+m^2})
        =P_{}(\mathbf{p},\sqrt{\mathbf{p}^2+m^2})
    \end{equation}
    and it is either an odd function or an even function determined by the parity of $2A+2D$:
    \begin{equation}
        P(-\mathbf{p}, -p^0) = (-1)^{2A+2D} P(\mathbf{p},p^0)
    \end{equation}
    for all 4-vectors $p$.
\end{thm}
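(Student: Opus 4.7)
The plan is to compute $\pi^{AB,CD}$ at the rest frame $p=(m,\mathbf{0})$, expand it in terms of the $T$-matrices of Section 2, and then Lorentz-boost to arbitrary $p$ to obtain a manifest polynomial with the claimed parity.

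First I would invoke Proposition \ref{Transformation of Spin Sum} with $\Lambda=W$ a rotation. Since $W$ fixes $(m,\mathbf{0})$, the matrix $\pi^{AB,CD}(\mathbf{0},m)$ must lie in the invariant subspace of the $SU(2)$-action $M\mapsto D^{AB}(W) M D^{CD}(W)^\dagger$ on the ambient matrix space. This action realizes the representation $(A,B)\otimes(D,C)=\bigoplus_{K_1,K_2}(K_1,K_2)$, whose restriction to $SU(2)$ is $K_1\otimes K_2$ on each summand, containing the trivial rep with multiplicity one precisely when $K_1=K_2=K$ with $K$ admissible. For each such $K$, the matrix ${}^{ABCD}_{\quad K}T^{00\cdots 0}$ sits in the $(K,K)$ subrep and is a nonzero rotation-invariant (invariance follows from \eqref{Transformation of T matrices} because rotations act trivially on the $0$-index; nonvanishing because this component generates the full standard basis of the subrep via symmetry and Lorentz-tracelessness). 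Hence these matrices span the invariant subspace, and
\[
\pi^{AB,CD}(\mathbf{0},m) \;=\; \sum_K c_K \,{}^{ABCD}_{\quad K}T^{00\cdots 0}
\]
for some constants $c_K=c_K(m)$.

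Next I would boost to general $p$ via Proposition \ref{Transformation of Spin Sum} with $\Lambda=L(p)$, invoking $L(p)^\mu{}_0 = p^\mu/m$ together with the tensor transformation law \eqref{Transformation of T matrices}. Only the $0$-components of the Lorentz indices survive the contraction, producing
\[
\pi^{AB,CD}(p) \;=\; \sum_K c_K \, m^{-2K}\, p_{\mu_1}\cdots p_{\mu_{2K}}\,{}^{ABCD}_{\quad K}T^{\mu_1\cdots\mu_{2K}},
\]
which is manifestly a polynomial $P$ in the four components of $p$ that agrees with $\pi^{AB,CD}$ on the mass shell.

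The parity claim is then immediate: under $p\mapsto -p$, the $K$-th term acquires a factor $(-1)^{2K}$, and the parity condition from Section 2 (that $2K$, $2A+2D$ and $2B+2C$ share a common parity) forces $(-1)^{2K}=(-1)^{2A+2D}$ uniformly across the sum, yielding $P(-p)=(-1)^{2A+2D}P(p)$. The only nontrivial step is the rest-frame analysis, i.e.\ verifying that the rotation-invariant subspace is exhausted by the matrices ${}^{ABCD}_{\quad K}T^{00\cdots 0}$; once that identification is in place, the remainder is a mechanical consequence of Lorentz covariance and the parity constraint already built into the $T$-matrices.
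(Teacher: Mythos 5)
Your proposal is correct and follows essentially the same route as the paper: identify the rest-frame value as an $SU(2)$-invariant, expand it in the basis $\{{}^{ABCD}_{\quad K}T^{00\cdots 0}\}$ of the trivial-isotypic subspace coming from the $(K,K)$ summands, boost with $L(p)$ using the tensor law \eqref{Transformation of T matrices} to obtain the covariant polynomial, and read off the parity from $2K\equiv 2A+2D \pmod 2$. Your added justification that each admissible $(K,K)$ contributes the trivial $SU(2)$ rep with multiplicity one, and that $T^{00\cdots0}\neq 0$, only makes explicit what the paper asserts in passing.
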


Let's prove the theorem.

First we can prove that the initial value $\pi^{AB,CD}(\mathbf{0})$ can be expressed as a linear combination of above matrices
\begin{equation}\label{Initial value of Spin Sum}
    \pi^{AB,CD}(\mathbf{0}) = \sum_{K \text{satisfying }\eqref{ABCDK}} 
    \xi_{K}^{ABCD}{} ^{ABCD}_{\quad K}T^{0_1 0_2 \cdots 0_{2K}}
\end{equation}
This is because by \eqref{Transformation of Spin Sum}, $\pi^{AB,CD}(\mathbf{0})$ is invariant under the representation restricted to $SU(2)$, 
so it must lie in the direct sum of $0$-subrepresentations of $SU(2)$ when we decompose 
\begin{equation}
\begin{aligned}
    (A,B)\otimes(D,C)|_{SU(2)}
    &= \bigoplus_{|A-D|\le K_1\le A+D, |B-C|\le K_2\le B+C} (K_1 , K_2 )|_{SU(2)}\\
    &= \bigoplus_{|A-D|\le K_1\le A+D, |B-C|\le K_2\le B+C} 
    \bigoplus_{|K_1-K_2| \le K_3\le K_1+K_2} K_3\\
\end{aligned}
\end{equation}
We see that each $0$-subrep of $SU(2)$ comes from a $(K,K)$-subrep of $SL(2,\mathbb{C})$ where $K$ satisfies \eqref{ABCDK}. Also, by \eqref{Transformation of T matrices}, $T^{00\cdots 0}$ is non-zero and invariant under $SU(2)$, so they form a basis for the $0$-subspace of $SU(2)$.

Finally, combine \eqref{Transformation of Spin Sum}, \eqref{Transformation of T matrices} and \eqref{Initial value of Spin Sum}, we have the equality on mass shell
\begin{equation}
    \pi^{AB,CD}(\mathbf{p}) = \sum_{K \text{satisfying }\eqref{ABCDK}} 
    \xi_{K}^{ABCD}{} ^{ABCD}_{\quad K} T^{\mu_1\mu_2 \cdots \mu_{2K}} m^{-2K} p_{\mu_1}p_{\mu_2} \cdots p_{\mu_{2K}}
\end{equation}
So we define the polynomial as the RHS.

The proof is done. 
\\\\



With this theorem we can finally begin the proof for spin statistics. First we show that we need only verify a simpler case:

\begin{lemma}\footnote{Weinberg in \cite{weinberg2002quantum} mentioned and used this result.}

    For $[\psi^{AB}_{a b}(x), \psi^{CD\dagger}_{c d}(y)]_{\mp}$ vanishes for space-like $x-y$, it is sufficient that it vanishes when $x^0=y^0$. 
\end{lemma}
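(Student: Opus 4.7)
The plan is to exploit the Lorentz covariance of the (anti)commutator together with the standard fact that any space-like four-vector can be boosted to a purely spatial one. Write $z = x-y$ and
\[
\Delta(z) \;:=\; [\psi^{AB}_{ab}(x),\, \psi^{CD\dagger}_{cd}(y)]_{\mp}.
\]
The explicit mode expansion given earlier shows that the right-hand side depends only on $z$ and is proportional to the identity operator on Fock space, so $\Delta$ is effectively a matrix-valued $\mathbb{C}$-number function of $z$.

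The first step is to derive the Lorentz transformation law of $\Delta$. Applying Axiom 1 to each factor and taking the Hermitian conjugate to obtain the transformation of $\psi^{CD\dagger}$, then using unitarity of $U_0(\Lambda,0)$ and the triviality of $U_0$-conjugation on a c-number, a short index bookkeeping yields
\[
\Delta(\lambda(\Lambda)\, z) \;=\; D^{AB}(\Lambda)\,\Delta(z)\, D^{CD\dagger}(\Lambda)
\]
for every $\Lambda \in SL(2,\mathbb{C})$. Since $D^{AB}$ and $D^{CD}$ are finite-dimensional representations, the matrices on the right are invertible, and vanishing of $\Delta$ at $z$ is equivalent to vanishing at $\lambda(\Lambda)z$.

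The second step is to choose, for any space-like $z$, a Lorentz boost $\Lambda_z$ sending $z$ to a purely spatial vector. A boost along $\vec z$ with rapidity $\eta$ satisfying $\tanh\eta = -z^0/|\vec z|$ does the job; admissibility is precisely the condition $|z^0|<|\vec z|$, i.e.\ that $z$ is space-like. Setting $w := \lambda(\Lambda_z) z$ then gives $w^0 = 0$, so $\Delta(w)$ is (after translating to $y = 0$, $x = (0,\vec w)$) an equal-time (anti)commutator, which vanishes by hypothesis.

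Combining the two steps,
\[
\Delta(z) \;=\; D^{AB}(\Lambda_z)^{-1}\,\Delta(w)\,\bigl(D^{CD\dagger}(\Lambda_z)\bigr)^{-1} \;=\; 0,
\]
as required. The only mildly delicate point is the covariance derivation, where one must correctly track the action on $\psi^\dagger$ and keep row/column index conventions straight; this is routine because $\Delta$ is a c-number and so the sandwiching $U_0 \cdots U_0^{-1}$ cancels cleanly, and I do not expect any genuine obstacle.
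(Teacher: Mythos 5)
Your proposal is correct and follows essentially the same route as the paper: use the Lorentz covariance of the (anti)commutator under $D^{AB}(\Lambda)\,\cdot\,D^{CD\dagger}(\Lambda)$ together with the invertibility of these finite-dimensional representations, and boost the space-like separation $x-y$ to an equal-time configuration. You supply slightly more detail (the explicit rapidity condition $\tanh\eta=-z^0/|\vec z|$ and the c-number cancellation of the $U_0$ conjugation) than the paper's terse two-line argument, but the underlying idea is identical.
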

\begin{proof}
    We can choose a Lorentz transformation $\Lambda$ such that $(\Lambda x)^0=(\Lambda y)^0$. 
    Under Lorentz transformation we have 
    \begin{equation}
        [\psi^{AB}(x), \psi^{CD\dagger}(y)]_{\mp}
        = U_0(\Lambda)^{-1} D^{AB}(\Lambda^{-1})
        [\psi^{AB}(\Lambda x), \psi^{CD\dagger}(\Lambda y)]_{\mp} D^{CD\dagger}(\Lambda^{-1}) U_0(\Lambda)
    \end{equation}
    So it follows. 
\end{proof}

Now we rewrite the above polynomial: substitute all exponentials of $p^0$ of degree larger than 1 by substituting $(p^0)^2=\mathbf{p}^2+m^2$. Then the value on mass shell do not change (but the polynomial changes) and can be written as
 \begin{equation}
    \pi^{AB, CD}(\mathbf{p}) = P(\mathbf{p}) + 2 \sqrt{\mathbf{p}^2+m^2} Q(\mathbf{p})
\end{equation}
where $P$ and $Q$ are 3-variables polynomials in $\mathbf{p}$ alone, with
\begin{equation}
    \begin{aligned}
    & P(-\mathbf{p})=(-)^{2 A+2 D} P(\mathbf{p}) \\
    & Q(-\mathbf{p})=-(-)^{2 A+2 D} Q(\mathbf{p})
    \end{aligned}
\end{equation}
For $x-y$ space-like, by the above lemma we can adopt a Lorentz frame in which $x^0=y^0$, and thus
\begin{equation}
    \begin{aligned}
    [\psi^{AB}(x), \psi^{CD\dagger}(y)]_\mp
    = & [\kappa^{AB} \kappa^{CD*} \mp (-)^{2 A+2 D} \lambda^{AB} \lambda^{CD*}] P(-i \nabla) \Delta_{+}(\mathbf{x}-\mathbf{y}, 0)  \\
    & +[\kappa^{AB} \kappa^{CD} \pm(-)^{2 A+2 D} \lambda^{AB} \lambda^{CD*}] Q(-i \nabla) \delta^3(\mathbf{x}-\mathbf{y})
    \end{aligned}
\end{equation}

In order that this should vanish when $\mathbf{x} \neq \mathbf{y}$, we must have
\begin{equation}
    \kappa^{AB} \kappa^{CD*}=\pm(-1)^{2 A+2 D} \lambda^{AB} \lambda^{CD*}
\end{equation}

For the case where $A=C$ and $B=D$. 
\begin{equation}
    |\kappa^{AB}|^2=\pm(-)^{2 A+2 B}|\lambda^{AB}|^2
\end{equation}

This is possible if and only if
\begin{equation}
    \pm(-1)^{2 A+2 B}=+1
\end{equation}
and
\begin{equation}
    |\kappa^{AB}|^2=|\lambda^{AB}|^2
\end{equation} 

Return to the general case we have
\begin{equation}
    \frac{\kappa^{AB}}{\kappa^{CD}}=(-1)^{2 A-2 C} \frac{\lambda^{AB}}{\lambda^{CD}}
\end{equation}

To summarize what we have proved above:

\begin{thm}(Spin statistics, massive irreducible case)\label{Main-Spin statistics}
\footnote{This result is stated in \cite{weinberg2002quantum}, but there is no rigorous proof.}
\footnote{The factor $(-)^{2B}$ is irrelevant when we just consider one field, but is essential when considering different fields.}

    In order for all fields $\psi^{AB}(x)$ constructed from a massive particle with spin $j$ to satisfy Axiom 1, it is equivalent that whether it is boson or fermion depends on whether $2j$ is even or odd, 
    and $\lambda^{AB} = (-)^{2B} \kappa^{AB} c$, where $c$ is a constant independent of $A,B$.  
\end{thm}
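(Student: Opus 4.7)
My proposal is to package the two algebraic identities already obtained above — the diagonal constraint $\pm(-1)^{2A+2B}=+1$ together with $|\kappa^{AB}|=|\lambda^{AB}|$, and the off-diagonal constraint $\kappa^{AB}/\kappa^{CD}=(-1)^{2A-2C}\lambda^{AB}/\lambda^{CD}$ — into the two stated conclusions. The bridge between the two forms is the following parity relation: because the intertwiner equation~\eqref{C} admits a non-zero $u(\mathbf{0})$ only when $j$ occurs in the $SU(2)$-decomposition of $(A,B)$, the range $|A-B|\le j\le A+B$ forces $2A+2B\equiv 2j\pmod 2$ uniformly across all field representations admissible for the given particle.

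For the statistics clause, I would rewrite the diagonal constraint $\pm(-1)^{2A+2B}=+1$ as $\pm(-1)^{2j}=+1$ via the parity bridge, and read off that the sign $\mp$ in $[\psi,\psi^\dagger]_\mp$ is forced to be $-$ (commutator, boson) when $2j$ is even and $+$ (anticommutator, fermion) when $2j$ is odd. Because this sign depends only on $j$, the choice is simultaneously consistent for every $(A,B)$ that one might build for the particle, which is what makes the statement meaningful for the particle rather than for a single field.

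For the form of $\lambda^{AB}$, I would recast the cross-ratio constraint as the invariance of $(-1)^{2A}\lambda^{AB}/\kappa^{AB}$ under changes of $(A,B)$; call the common value $c_0$. Using the identity $(-1)^{2A}=(-1)^{2B}(-1)^{2A+2B}=(-1)^{2B}(-1)^{2j}$, this immediately gives $\lambda^{AB}/\kappa^{AB}=(-1)^{2B}c$ with $c=(-1)^{2j}c_0$, a constant independent of $(A,B)$ — exactly the form asserted in the theorem.

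The converse direction (sufficiency of these conditions for Axiom~2) is immediate: substituting $\lambda^{AB}=(-1)^{2B}c\,\kappa^{AB}$ back into the commutator/anticommutator expression computed above makes both the $P(-i\nabla)\Delta_+$ and the $Q(-i\nabla)\delta^3$ coefficient brackets vanish, and then the preceding lemma upgrades this equal-time vanishing to arbitrary spacelike separation. The main obstacle is not analytic — all the genuine work lives in the polynomial-form theorem for the spin sum and in the commutator computation preceding the statement — but rather the careful bookkeeping: aligning the $\pm/\mp$ dictionary between Bose/Fermi choice and the algebraic sign in the coefficient bracket, and keeping the parity relation $2A+2B\equiv 2j\pmod 2$ straight when trading $(-1)^{2A}$ for $(-1)^{2B}$ in the final ratio.
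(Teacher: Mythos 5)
Your proposal follows essentially the same route as the paper: the theorem there is explicitly a summary of the derivation that precedes it, and your ``packaging'' of the diagonal constraint $\pm(-1)^{2A+2B}=+1$ and the cross-ratio constraint $\kappa^{AB}/\kappa^{CD}=(-1)^{2A-2C}\lambda^{AB}/\lambda^{CD}$ is exactly what is needed. Your parity bridge $2A+2B\equiv 2j\pmod 2$ (from $|A-B|\le j\le A+B$ via the $SU(2)$-decomposition needed for a non-zero intertwiner $u(\mathbf{0})$) is correct and is in fact a step the paper leaves implicit, as is your rewriting $(-1)^{2A}=(-1)^{2B}(-1)^{2j}$ to land on the stated form $\lambda^{AB}=(-1)^{2B}c\,\kappa^{AB}$.

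One step in your sufficiency direction is wrong as stated: substituting $\lambda^{AB}=(-1)^{2B}c\,\kappa^{AB}$ with $|c|=1$ does \emph{not} make both coefficient brackets vanish. The two brackets are $\kappa^{AB}\kappa^{CD*}\mp(-1)^{2A+2D}\lambda^{AB}\lambda^{CD*}$ and $\kappa^{AB}\kappa^{CD*}\pm(-1)^{2A+2D}\lambda^{AB}\lambda^{CD*}$; their sum is $2\kappa^{AB}\kappa^{CD*}$, so for non-trivial fields they cannot both vanish. With the correct statistics choice only the bracket multiplying $P(-i\nabla)\Delta_+(\mathbf{x}-\mathbf{y},0)$ vanishes; the bracket multiplying $Q(-i\nabla)\delta^3(\mathbf{x}-\mathbf{y})$ survives and is harmless only because $\delta^3(\mathbf{x}-\mathbf{y})$ (and its derivatives applied to a test point with $\mathbf{x}\neq\mathbf{y}$) is supported at coincidence. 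This is also the reason the necessity direction extracts only the single condition $\kappa^{AB}\kappa^{CD*}=\pm(-1)^{2A+2D}\lambda^{AB}\lambda^{CD*}$ rather than two. With that correction your argument matches the paper's.
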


\begin{eg}$\quad$

        Massive scalar field has $(A,B)=(0,0)$ and $j=0$, so it describes bosons. 
    
        Massive vector field has $(A,B)=(\frac{1}{2}, \frac{1}{2})$ and $j=0,1$, so it describes bosons. 
    
        Massive Weyl field has $(\frac{1}{2}, 0)$ or $(0, \frac{1}{2})$ and $j=\frac{1}{2}$, so it  describes fermions. 

 \end{eg}

So we can choose a normalization for each $(A,B)$ field separately, and for $c$ we can adjust the phase in the definition of creation and annilation operators and $c=1$. As the result we have:
\begin{equation}
    \begin{aligned}
    \psi^{AB}_{a b}(x)= & (2 \pi)^{-3 / 2} \sum_\sigma \int d^3 p
    [e^{i p \cdot x}  u^{AB}_{a b}(\mathbf{p}, \sigma) a(\mathbf{p}, \sigma)
    + e^{-i p \cdot x} (-)^{2B} v^{AB}_{a b}(\mathbf{p}, \sigma) a^{c \dagger}(\mathbf{p}, \sigma)]
    \end{aligned}
\end{equation}


\section{Field equations for massive particles}

\subsection{General equation}

For a quantum field $\psi_l^+(x) = (2\pi)^{-3/2} \int d^3p e^{ip\cdot x} u_l(\mathbf{p},\sigma) a(\mathbf{p},\sigma)$, denote by 
\begin{equation}
\psi^{(+)AB}_l(p)= (2\pi)^{-3/2} u^{AB}_l(\mathbf{p},\sigma) a(\mathbf{p},\sigma)
\end{equation}
its Fourier transformation, and the similarly 
\begin{equation}
\psi^{(-)AB}_l(p)= (2\pi)^{-3/2} (-)^{2B} v^{AB}_l(\mathbf{p},\sigma) a^\dagger(\mathbf{p},\sigma)
\end{equation}
Clearly by unitarity of $D^{CD}|_{SU(2)}$, we have $u^{CD\dagger}(0)u^{CD}(0) = I$, so we have
\begin{equation}\label{Field equation momentum}
\begin{aligned}
    \psi^{(+)AB}(p) &= \Pi^{AB,CD}(p) \psi^{(+)CD}(p)\\
    \psi^{(-)AB}(p) &= (-)^{2B+2D}\Pi^{AB,CD}(p) \psi^{(-)CD}(p)
\end{aligned}
\end{equation}

where we have defined:

\begin{defi}(Twisted spin sum)

Define the twisted spin sum
\begin{equation}
\begin{aligned}
    \Pi^{AB,CD}(p) &\equiv D^{AB}(Lp) u^{AB}(\mathbf{0}) u^{CD\dagger}(\mathbf{0}) D^{CD-1}(Lp)\\
    &= D^{AB}(Lp) v^{AB}(\mathbf{0}) v^{CD\dagger}(\mathbf{0}) D^{CD-1}(Lp)
\end{aligned}
\end{equation}
\end{defi}

We have its transformation rule just like for spin sum:

\begin{prop}
    \begin{equation}
        \Pi^{AB,CD}(\Lambda p)
        =
        D^{AB}(\Lambda) \Pi^{AB,CD}(p) D^{CD-1}(\Lambda)
    \end{equation}
\end{prop}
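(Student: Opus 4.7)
The plan is to mimic the proof of the analogous transformation law for the untwisted spin sum \eqref{Transformation of Spin Sum}, exploiting the Wigner decomposition of the standard boost $L(\Lambda p)$ together with the intertwining relations \eqref{C} for $u(\mathbf{0})$ and $v(\mathbf{0})$. The crucial input, which is peculiar to the massive case, is that the little-group element $W(\Lambda,p)$ lies in $SU(2)$, so that $D^{AB}(W)$, $D^{CD}(W)$, and $D^{(j)}(W)$ are all unitary.

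First I would start from the defining identity $\Lambda L(p) = L(\Lambda p) W(\Lambda,p)$ and rewrite it as $L(\Lambda p) = \Lambda L(p) W^{-1}$. Applying $D^{AB}$ and $D^{CD}$ as homomorphisms and substituting into the definition of $\Pi^{AB,CD}(\Lambda p)$, I would pull the outer factors $D^{AB}(\Lambda)$ and $D^{CD-1}(\Lambda)$ to the outside, leaving a central sandwich
\begin{equation*}
D^{AB}(L(p))\bigl[\,D^{AB}(W)^{-1} u^{AB}(\mathbf{0}) u^{CD\dagger}(\mathbf{0}) D^{CD}(W)\,\bigr] D^{CD-1}(L(p)).
\end{equation*}

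The next step is to collapse the bracketed expression using \eqref{C}. Directly one has $D^{AB}(W)^{-1} u^{AB}(\mathbf{0}) = u^{AB}(\mathbf{0}) D^{(j)}(W)^{-1}$. For the right-hand factor I would take the Hermitian conjugate of the corresponding relation for the $(C,D)$ field and invoke unitarity of $D^{(j)}|_{SU(2)}$ and $D^{CD}|_{SU(2)}$ to obtain $u^{CD\dagger}(\mathbf{0}) D^{CD}(W) = D^{(j)}(W)\, u^{CD\dagger}(\mathbf{0})$. The two factors $D^{(j)}(W)^{\pm 1}$ then cancel in the middle, the bracket collapses to $u^{AB}(\mathbf{0}) u^{CD\dagger}(\mathbf{0})$, and reassembling with the surviving $D^{AB}(L(p))$ and $D^{CD-1}(L(p))$ reproduces exactly $\Pi^{AB,CD}(p)$. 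The identity involving $v(\mathbf{0})$'s follows by the same manipulation, since $D^{(j)*}$ is also unitary on $SU(2)$.

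There is no genuine obstacle; the whole argument is a rerun of the spin sum case with $D^{CD\dagger}(\Lambda)$ replaced by $D^{CD-1}(\Lambda)$. The only delicate point is the unitarity step: without it, the dagger of \eqref{C} produces $D^{CD\dagger}(W)$ rather than $D^{CD}(W)$ and the cancellation fails. This is precisely where the massiveness assumption enters, via $W \in SU(2)$.
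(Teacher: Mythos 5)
Your argument is correct and is essentially the paper's own proof: both rest on the Wigner decomposition $\Lambda L(p)=L(\Lambda p)W(\Lambda,p)$, the intertwining relation \eqref{C} at zero momentum, and the unitarity of $D^{CD}$ and $D^{(j)}$ on $SU(2)$ to cancel the little-group factors in the middle. The only difference is cosmetic — you expand $\Pi^{AB,CD}(\Lambda p)$ and peel off $D^{AB}(\Lambda)$, $D^{CD-1}(\Lambda)$, whereas the paper starts from the right-hand side and reassembles $L(\Lambda p)$ — and you correctly identify the unitarity step as the place where massiveness enters.
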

\begin{proof}
    \begin{equation}
    \begin{aligned}
        &D^{AB}(\Lambda) \Pi^{AB,CD}(p) D^{CD-1}(\Lambda)
        = D^{AB}(\Lambda) D^{AB}(Lp) u^{AB}(0) u^{CD\dagger}(0) D^{CD-1}(Lp) D^{CD-1}(\Lambda)\\
        &= D^{AB}(L(\Lambda p)) D^{AB}(W(\Lambda,p)) u^{AB}(0) u^{CD\dagger}(0) D^{CD-1}(W(\Lambda,p)) D^{CD-1}(L(\Lambda p))\\
        &= D^{AB}(L(\Lambda p)) D^{AB}(W(\Lambda,p)) u^{AB}(0) u^{CD\dagger}(0) D^{CD\dagger}(W(\Lambda,p)) D^{CD-1}(L(\Lambda p))\\
        &= D^{AB}(L(\Lambda p)) u^{AB}(0) D^j(W(\Lambda,p)) D^{j\dagger}(W(\Lambda,p)) u^{CD\dagger}(0)  D^{CD-1}(L(\Lambda p))\\
        &= D^{AB}(L(\Lambda p)) u^{AB}(0)u^{CD\dagger}(0)  D^{CD-1}(L(\Lambda p))\\
        &= \Pi^{AB,CD}(\Lambda p)
    \end{aligned}
    \end{equation}
    where we have used the unitarity of $D^{CD}|_{SU(2)}$ in the third line. 
\end{proof}

So we have the similar result for twisted spin sum:

\begin{thm}$\quad$
\label{Twisted}

    In the massive case there is a 4-variables polynomial $P$ of coefficients $(2A+1)(2B+1)\times(2C+1)(2D+1)$-matrices, such that on the mass shell its value always equal the twisted spin sum
    \begin{equation}
        \Pi^{AB,CD}  (\mathbf{p},\sqrt{\mathbf{p}^2+m^2})
        =P_{}(\mathbf{p},\sqrt{\mathbf{p}^2+m^2})
    \end{equation}
    and it is either an odd function or an even function determined by the parity of $2A+2C$:
    \begin{equation}
        P(-\mathbf{p}, -p^0) = (-1)^{2A+2C} P(\mathbf{p},p^0)
    \end{equation}
    for all 4-vectors $p$. And the degrees $d$ of monomials in $P$ satisfies
    \begin{equation}
        \max\{|A-C|,|B-D|\} \le d/2 \le \min\{A+C,B+D\}
    \end{equation}
\end{thm}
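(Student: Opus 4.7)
The plan is to mirror the proof of the earlier theorem for the ordinary spin sum, replacing the $T$-matrices built from the decomposition of $(A,B)\otimes(D,C)$ with a parallel family $\tilde{T}$ built from $(A,B)\otimes(C,D)$. The substantive change comes from the twist: the matrix action $M \mapsto D^{AB}(\Lambda) M D^{CD-1}(\Lambda)$ realizes the space of matrices as $V_1 \otimes V_2^\ast$ where $V_1,V_2$ carry $(A,B),(C,D)$, and since every finite-dimensional irreducible of $SL(2,\mathbb{C})$ is self-dual (via the spinor $\epsilon$-forms) this is isomorphic to $(A,B)\otimes(C,D)$. Contrast this with the untwisted action $M \mapsto D^{AB}(\Lambda) M D^{CD\dagger}(\Lambda)$, which uses complex conjugation rather than dualization, and complex conjugation swaps the two factors of an $SL(2,\mathbb{C})$-irrep, producing $(A,B)\otimes(D,C)$. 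Decomposing
\begin{equation*}
    (A,B)\otimes(C,D) = \bigoplus_{\substack{|A-C|\le K_1\le A+C\\|B-D|\le K_2\le B+D}} (K_1,K_2),
\end{equation*}
the $(K,K)$-summands exist precisely when $\max\{|A-C|,|B-D|\}\le K\le\min\{A+C,B+D\}$, and a standard basis yields symmetric traceless matrices $\tilde{T}^{\mu_1\cdots\mu_{2K}}$ obeying the twisted intertwining rule
\begin{equation*}
    D^{AB}(\Lambda)\,\tilde{T}^{\mu_1\cdots\mu_{2K}}\,D^{CD-1}(\Lambda) = \Lambda_{\nu_1}{}^{\mu_1}\cdots\Lambda_{\nu_{2K}}{}^{\mu_{2K}}\,\tilde{T}^{\nu_1\cdots\nu_{2K}}.
\end{equation*}

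Next I would evaluate at rest. At $\mathbf{p}=\mathbf{0}$ the little group is $SU(2)$, on which $D^{CD-1}=D^{CD\dagger}$ by unitarity, so the $\Pi$-transformation proposition gives $SU(2)$-invariance of $\Pi^{AB,CD}(\mathbf{0})$ under the twisted matrix action. The same dimension-count argument as for the ordinary spin sum — every $0$-subrep of $SU(2)$ sits inside a $(K,K)$-subrep of $SL(2,\mathbb{C})$ with $K$ in the allowed range, and the $\tilde{T}^{00\cdots 0}_K$ furnish a basis for the invariant subspace — gives $\Pi^{AB,CD}(\mathbf{0}) = \sum_K \tilde{\xi}_K\,\tilde{T}^{00\cdots 0}_K$. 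Boosting by combining the twisted $\Pi$-transformation rule with the tensor rule for $\tilde{T}$, I obtain on the mass shell
\begin{equation*}
    \Pi^{AB,CD}(\mathbf{p}) = \sum_K \tilde{\xi}_K\,\tilde{T}^{\mu_1\cdots\mu_{2K}}_K\,m^{-2K}\,p_{\mu_1}\cdots p_{\mu_{2K}},
\end{equation*}
and I define $P$ as the right-hand side. The degree bound is immediate from the $K$-range, and the parity $P(-p)=(-1)^{2A+2C}P(p)$ follows because every monomial has degree $2K$ with $2K\equiv 2A+2C\pmod 2$, the standard Clebsch–Gordan parity in $A\otimes C$.

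The main obstacle is the very first step — confirming that the twisted matrix action really gives $(A,B)\otimes(C,D)$ — because the replacement $D^{CD\dagger}\mapsto D^{CD-1}$ exchanges complex conjugation for dualization, and the distinction between these two involutions (conjugation swaps $C\leftrightarrow D$, duality does not) is exactly what produces the $C\leftrightarrow D$ swap in the index bounds relative to the ordinary spin sum. Once this is settled, the rest is a routine translation of Section~2 and the earlier theorem to the twisted setting.
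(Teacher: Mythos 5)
Your proposal is correct and follows essentially the same route as the paper, which merely sketches the argument by saying ``modify the representation on $V$ to $M\mapsto D^{AB}(\Lambda)M D^{CD-1}(\Lambda)$'' and noting that the contragredient $\Lambda\mapsto D^{CD\,T}(\Lambda^{-1})$ is isomorphic to $D^{CD}$ rather than $D^{DC}$ --- precisely the self-duality point you identify as the crux. Your write-up just fills in the details (the twisted $\tilde{T}$-matrices, the rest-frame $SU(2)$ step using $D^{CD-1}=D^{CD\dagger}$ on the little group, and the parity count) that the paper leaves implicit by analogy with the untwisted case.
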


For the proof we just modify the representation on $V$ by $M\mapsto D^{AB}(\Lambda)M D^{CD-1}(\Lambda)$, and the LHS of the transformation rule for $T$ matrices by $D^{AB}(\Lambda)T^{\cdots} D^{CD-1}$. 

Note that $C,D$ are interchanged, because $\Lambda\mapsto D^{CDT}(\Lambda^{-1})$ is isomorphic to $D^{CD}$, not $D^{DC}$ as in the spin sum case. 

At this point we can derive from \eqref{Field equation momentum} two field equations:
\begin{equation}\label{Field equation}
\begin{aligned}
    \psi^{(+)AB}(x) &= \Pi^{AB,CD}(-i\partial) \psi^{(+)CD}(x)\\
    \psi^{(-)AB}(x) &= (-)^{2A+2C}(-)^{2B+2D} \Pi^{AB,CD}(-i\partial) \psi^{(-)CD}(x)
\end{aligned}
\end{equation}
where we have abused the notation to let $\Pi^{AB,CD}(-i\partial)$ instead of $P(-i\partial)$ denote replacing $p_\mu$ by $-i\partial_\mu$ in the polynomial $P$. The two phases in the second line cancel, so we have the following one field equation:

\begin{thm}
(Field equation of any spin, massive case)
\footnote{\cite{tung1967relativistic} neglect the phase of negative energy part. }
\footnote{In the combination of positive and negative part, the phase $(-)^{2B}$ plays an essential  role. }

The following field equation is satisfied:
\begin{equation}\label{Field equation}
    \psi^{AB}(x) = \Pi^{AB,CD}(-i\partial) \psi^{CD}(x)
\end{equation}
\end{thm}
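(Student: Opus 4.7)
The plan is to reduce the claim to the two position-space equations just displayed above the theorem and verify that the compound phase on the negative-frequency line trivialises. First I would split $\psi^{AB}(x) = \psi^{(+)AB}(x) + \psi^{(-)AB}(x)$ and act with $\Pi^{AB,CD}(-i\partial)$ on each summand. By linearity the desired identity $\psi^{AB}(x) = \Pi^{AB,CD}(-i\partial)\psi^{CD}(x)$ reduces to checking the analogous equation on $\psi^{(+)}$ and on $\psi^{(-)}$ separately; the positive-frequency case is already the first line of the preceding display.

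Next I would re-examine the origin of the factor $(-)^{2A+2C}(-)^{2B+2D}$ on the negative-frequency line, because its internal structure is exactly what makes the simplification possible. The factor $(-)^{2A+2C}$ arises because $-i\partial_\mu$ applied to $e^{-ip\cdot x}$ produces $-p_\mu$ rather than $p_\mu$, so that $\Pi^{AB,CD}(-i\partial)$ acting on the integrand gives $\Pi^{AB,CD}(-p) = (-)^{2A+2C}\Pi^{AB,CD}(p)$ by the parity clause of Theorem \ref{Twisted}. The remaining $(-)^{2B+2D}$ is precisely the factor already appearing in the momentum-space equation \eqref{Field equation momentum}, descended from the $(-)^{2B}$ normalisation built into $\psi^{(-)AB}$ by Theorem \ref{Main-Spin statistics}.

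The substantive step, which I expect to be the only real obstacle, is to show that the compound phase $(-)^{2A+2B+2C+2D}$ equals $+1$. Here I would invoke the fact that both field representations $(A,B)$ and $(C,D)$ are built on the same particle of physical spin $j$: the intertwining condition $u(\mathbf{0}) D^{(j)}(W) = D^{AB}(W) u(\mathbf{0})$ on $SU(2)$ forces $j$ to appear in the Clebsch--Gordan decomposition of $A\otimes B$, and similarly of $C\otimes D$. Hence $|A-B|\le j \le A+B$ and $|C-D|\le j \le C+D$, so both $A+B-j$ and $C+D-j$ are non-negative integers. This gives $(-)^{2A+2B} = (-)^{2C+2D} = (-)^{2j}$, and therefore $(-)^{2A+2B+2C+2D} = (-)^{4j} = 1$. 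With this cancellation the negative-frequency piece satisfies the same equation as the positive-frequency piece, and summing yields the theorem.
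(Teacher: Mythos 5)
Your proposal is correct and follows essentially the same route as the paper: split into positive- and negative-frequency parts, apply the momentum-space relations \eqref{Field equation momentum}, use the parity clause of Theorem \ref{Twisted} to pick up $(-)^{2A+2C}$ from $\Pi^{AB,CD}(-p)$, and observe that the compound phase $(-)^{2A+2B+2C+2D}$ is $+1$. In fact you supply the justification for the final cancellation (namely that $(-)^{2A+2B}=(-)^{2C+2D}=(-)^{2j}$ because both $(A,B)$ and $(C,D)$ must contain spin $j$) which the paper asserts with only the phrase ``the two phases in the second line cancel.''
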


So clearly the field equation is not unique: we can adjust the polynomial $P$ by replacing $(p_0)^2$ with $\mathbf{p}^2+m^2$ in any monomial, which do not change the value on the mass shell. But this is just adding a term of Klein-Gordan operator connected by several partial derivative operators. 

The degrees of monomials in $P$ satisfy:
\begin{equation}
        \max\{|A-C|,|B-D|\} \le d/2 \le \min\{A+C,B+D\}
\end{equation}
As a special case, the degrees of $\Pi^{j,0;j,0}$ and $\Pi^{0,j;0,j}$ are all 0. So we do not have a nontrivial field equation for $(j,0)$ or $(0,j)$ field with itself from the above procedure. 
\footnote{This coincides with Weinberg.}
And $\Pi^{j,0;0,j}$ and $\Pi^{0,j;j,0}$ must be homogeneous of degree $2j$. This is the case that we have a homogeneous (except one term) field equation. 
\footnote{This is the result in Weinberg.}

But in the general case, the polynomial may not be homogeneous. 
\\\\\\

Let's talk about the relation between ordinary spin sum and twisted spin sum. As is used in the proof of the field equation, $D^{CD-T}$ is isomorphic (but not generally equal) $D^{CD}$, and $D^{CD*}$ is isomorphic to $D^{DC}$. So we may expect that $\pi^{AB,CD}(p)$ is almost the same as $\Pi^{AB,DC}(p)$, because their Lorentz transformation rule is under the same representation. 

Define 
\begin{equation}
    \Omega_{CD}: 
    \begin{pmatrix}
    \mathbb{C}^{(2C+1)(2D+1)} \to \mathbb{C}^{(2C+1)(2D+1)}\\
    e_{(c-1)(2D+1)+d} \mapsto e_{(d-1)(2C+1)+c}
    \end{pmatrix}
\end{equation}
This is just a basis transformation
\begin{equation}
    \Omega_{CD}: 
    \begin{pmatrix}
    V^{C}\otimes V^{D} \to V^{D}\otimes V^{C}\\
    v\otimes w \mapsto w\otimes v
    \end{pmatrix}
\end{equation}

So clearly 
\begin{equation}
\Omega_{CD}u^{CD}(\mathbf{0})=u^{DC}(\mathbf{0})
\end{equation}
Also we can show that $\Omega_{CD}$ is a homomorphism of representations:
\begin{equation}
    D^{DC-\dagger}(\Lambda) \Omega_{CD} = \Omega_{CD} D^{CD}(\Lambda)
\end{equation}
This is because $\exp(-ig)^{-\dagger} = \exp(-ig^\dagger)$ implies that the Lie algebra representation of $D^{DC-\dagger}$ is:
\begin{equation}
    \begin{aligned}
        \mathbf{J} \mapsto& \mathbf{J}^D \otimes I^C + I^D \otimes \mathbf{J}^C\\
        \mathbf{K} \mapsto& +i(\mathbf{J}^D \otimes I^C - I^D \otimes \mathbf{J}^C)
    \end{aligned}
\end{equation}
So putting these together we have
\begin{equation}
\begin{aligned}
    u^{DC\dagger}(\mathbf{0}) D^{DC-1}(Lp) &= u^{CD\dagger}(\mathbf{0})
    \Omega_{CD}^{\dagger}\Omega_{CD}^{-\dagger} D^{CD\dagger}(Lp) \Omega_{CD}^\dagger\\
    &= u^{CD\dagger}(\mathbf{0}) D^{CD\dagger}(Lp) \Omega_{CD}^\dagger
\end{aligned}
\end{equation}
Comparing with what we called the spin sum
\begin{equation}
    \pi^{AB,CD}(\mathbf{p})
    \equiv (2 p^0) u^{AB}(\mathbf{p}) u^{CD\dagger}(\mathbf{p})
    = (2 m) D^{AB}(Lp) u^{AB}(\mathbf{0}) u^{CD\dagger}(\mathbf{0}) D^{CD\dagger}(Lp)
\end{equation}
we derive:
\begin{prop}(Relation between spin sum and twisted spin sum)

\begin{equation}
    \Pi^{AB,DC}(p) = \frac{1}{2m} \pi^{AB,CD}(p) \Omega_{CD}^\dagger
\end{equation}
\end{prop}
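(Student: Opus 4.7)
The plan is to leverage the fact that $D^{DC-\dagger}$ is isomorphic (not equal) to $D^{CD}$, with the intertwiner being the tensor-factor swap $\Omega_{CD}$. Once this isomorphism is established, the twisted spin sum $\Pi^{AB,DC}(p)$ can be massaged into an expression built from the ingredients of $\pi^{AB,CD}(p)$ together with one trailing $\Omega_{CD}^\dagger$ that encodes the basis change. Almost all of the work has in fact been assembled in the paragraphs preceding the proposition; the proof is to bundle those observations and compare with the definition.

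First I would establish the two algebraic facts that drive everything. Fact A is $\Omega_{CD}\,u^{CD}(\mathbf{0}) = u^{DC}(\mathbf{0})$: both sides are determined by the little-group intertwining condition \eqref{C} on $SU(2)$, and the swap $v\otimes w \mapsto w\otimes v$ manifestly exchanges the roles of $V^C\otimes V^D$ and $V^D\otimes V^C$ compatibly with the normalization. Fact B is the intertwining relation $D^{DC-\dagger}(\Lambda)\,\Omega_{CD} = \Omega_{CD}\,D^{CD}(\Lambda)$, which I would verify at the Lie algebra level using $\exp(-ig)^{-\dagger} = \exp(-ig^\dagger)$. Concretely, the generators of $D^{DC-\dagger}$ are $\mathbf{J}\mapsto \mathbf{J}^D\otimes I^C + I^D\otimes \mathbf{J}^C$ and $\mathbf{K}\mapsto +i(\mathbf{J}^D\otimes I^C - I^D\otimes \mathbf{J}^C)$; conjugating by $\Omega_{CD}$ swaps the two tensor factors and yields precisely the generators of $D^{CD}$ acting on $V^C\otimes V^D$.

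Next I would take adjoints to get the working forms. From Fact A, $u^{CD\dagger}(\mathbf{0})\,\Omega_{CD}^\dagger = u^{DC\dagger}(\mathbf{0})$; from Fact B, rewriting as $D^{DC-\dagger}(\Lambda) = \Omega_{CD}\,D^{CD}(\Lambda)\,\Omega_{CD}^{-1}$ and then taking $\dagger$ gives $D^{DC-1}(\Lambda) = \Omega_{CD}^{-\dagger}\,D^{CD\dagger}(\Lambda)\,\Omega_{CD}^{\dagger}$. Substituting these into
\begin{equation}
    \Pi^{AB,DC}(p) = D^{AB}(Lp)\,u^{AB}(\mathbf{0})\,u^{DC\dagger}(\mathbf{0})\,D^{DC-1}(Lp)
\end{equation}
causes the inner $\Omega_{CD}^{-\dagger}$ and $\Omega_{CD}^\dagger$ to telescope, leaving $D^{AB}(Lp)\,u^{AB}(\mathbf{0})\,u^{CD\dagger}(\mathbf{0})\,D^{CD\dagger}(Lp)\,\Omega_{CD}^\dagger$. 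Comparing with the rewriting $\pi^{AB,CD}(p) = 2m\cdot D^{AB}(Lp)\,u^{AB}(\mathbf{0})\,u^{CD\dagger}(\mathbf{0})\,D^{CD\dagger}(Lp)$ yields the claimed factor $1/(2m)$.

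The main obstacle is bookkeeping among the four closely related operations $(\,\cdot\,)^{-1}$, $(\,\cdot\,)^{T}$, $(\,\cdot\,)^{*}$, $(\,\cdot\,)^{\dagger}$ applied to $D^{CD}$ and $\Omega_{CD}$: the reason the statement involves $D^{DC}$ rather than $D^{CD}$ on the twisted-spin-sum side is exactly that $D^{CD*}\cong D^{DC}$ while $D^{CD-\dagger}\cong D^{CD}$, and mixing these up would produce a wrong index pattern or a missing $\Omega$. Once the Lie algebra check in Fact B is done carefully and the correct flavor of adjoint is tracked, the rest is a one-line substitution.
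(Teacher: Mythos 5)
Your proposal is correct and follows essentially the same route as the paper: the same two key facts ($\Omega_{CD}u^{CD}(\mathbf{0})=u^{DC}(\mathbf{0})$ and the intertwining relation $D^{DC-\dagger}(\Lambda)\Omega_{CD}=\Omega_{CD}D^{CD}(\Lambda)$ checked at the Lie algebra level), the same adjoint/telescoping manipulation of $u^{DC\dagger}(\mathbf{0})D^{DC-1}(Lp)$, and the same comparison with $\pi^{AB,CD}(p)=2m\,D^{AB}(Lp)u^{AB}(\mathbf{0})u^{CD\dagger}(\mathbf{0})D^{CD\dagger}(Lp)$. No gaps.
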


Now we can formulate the field equation in terms of ordinary spin sum:

\begin{thm}(Field equation of any spin, massive case)

The following field equation is satisfied:
\begin{equation}\label{Field equation}
    \psi^{AB}(x) = \frac{1}{2m} \pi^{AB,DC}(-i\partial) \Omega_{DC}^\dagger \psi^{CD}(x)
\end{equation}
\end{thm}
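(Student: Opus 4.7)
The plan is to derive this statement directly from the preceding theorem (the field equation in twisted-spin-sum form) together with the just-established proposition relating twisted and ordinary spin sums. In other words, this is essentially a substitution result; the main task is to get the index bookkeeping correct, since the relation between $\Pi$ and $\pi$ involves swapping the last two labels and an insertion of $\Omega$.

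First I would recall the established equation
\begin{equation}
    \psi^{AB}(x) = \Pi^{AB,CD}(-i\partial)\,\psi^{CD}(x),
\end{equation}
and the relation
\begin{equation}
    \Pi^{AB,DC}(p) = \frac{1}{2m}\,\pi^{AB,CD}(p)\,\Omega_{CD}^{\dagger}.
\end{equation}
Swapping the dummy labels $C\leftrightarrow D$ in the latter yields
\begin{equation}
    \Pi^{AB,CD}(p) = \frac{1}{2m}\,\pi^{AB,DC}(p)\,\Omega_{DC}^{\dagger},
\end{equation}
which holds as a polynomial identity on the mass shell. Since the field equation in $\Pi$-form was obtained by the prescription $p_\mu\mapsto -i\partial_\mu$ applied to a polynomial whose value on the mass shell agrees with $\Pi^{AB,CD}(p)$, and since that value may be represented equally well by $\tfrac{1}{2m}\pi^{AB,DC}(p)\Omega_{DC}^{\dagger}$, the substitution $\Pi^{AB,CD}(-i\partial)\mapsto \tfrac{1}{2m}\pi^{AB,DC}(-i\partial)\Omega_{DC}^{\dagger}$ preserves the content of the field equation.

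Next I would check that the ordering is consistent with matrix dimensions. The object $\pi^{AB,DC}(-i\partial)$ is a $(2A+1)(2B+1)\times (2D+1)(2C+1)$-matrix differential operator, while $\Omega_{DC}^{\dagger}:\mathbb{C}^{(2C+1)(2D+1)}\to\mathbb{C}^{(2D+1)(2C+1)}$ is the basis-swap sending $V^{C}\otimes V^{D}$ to $V^{D}\otimes V^{C}$. Hence $\pi^{AB,DC}(-i\partial)\Omega_{DC}^{\dagger}$ acts on the column vector $\psi^{CD}(x)$ and lands in the space $\mathbb{C}^{(2A+1)(2B+1)}$, as required. The polynomial $p_\mu$ entries commute with $\Omega_{DC}^{\dagger}$, so no ordering issue arises upon the substitution $p_\mu\mapsto -i\partial_\mu$.

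The only subtle point — and what I would present as the single nontrivial step — is the $C\leftrightarrow D$ relabeling: one must recognize that the earlier proposition is most naturally read by viewing $CD$ as the ``source'' representation and $DC$ as the target obtained from the isomorphism $D^{CD-T}\cong D^{CD}$ implemented by $\Omega$. Once that relabeling is justified (and it is, because the proposition holds for all allowed pairs of labels), the resulting equation in the statement follows immediately. No further analysis is required.
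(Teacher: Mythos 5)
Your proposal is correct and follows exactly the route the paper intends: the theorem is an immediate corollary of the $\Pi$-form field equation together with the proposition $\Pi^{AB,DC}(p)=\tfrac{1}{2m}\pi^{AB,CD}(p)\Omega_{CD}^{\dagger}$, after the $C\leftrightarrow D$ relabeling. Your added checks (dimension bookkeeping for $\Omega_{DC}^{\dagger}$ and the remark that the polynomial identity need only hold on the mass shell, where the field's Fourier support lies) are exactly the points worth making explicit.
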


In the special case where one of $C,D$ is 0, $\Omega_{CD}$ is the identity matrix, so the twisted spin sum is the same as the spin sum. This is the case in \cite{weinberg1964feynman}. This can be seen more directly by noticing
\begin{equation}
    \begin{aligned}
        D^{0,j}(\Lambda) = D^{j,0. -\dagger}(\Lambda)\\
        D^{j,0}(\Lambda) = D^{0,j. -\dagger}(\Lambda)
    \end{aligned}
\end{equation}
is not only isomorphic as representations, but directly equal in their standard basis.

\subsection{Examples}

The first example is massive spin-$\frac{1}{2}$ particles with field rep $(\frac{1}{2},0)$ or $(0, \frac{1}{2})$. The corresponding field equation is what we called Weyl equations or Dirac equation.

We can easily see that the zero momentum value coefficients and spin sums are:
\begin{equation}
    \begin{pmatrix}
        u^{1/2,0}(\mathbf{0})\\
        u^{0,1/2}(\mathbf{0})
    \end{pmatrix}
    =
    \begin{pmatrix}
        1 & 0\\
        0 & 1\\
        1 & 0\\
        0 & 1
    \end{pmatrix}
\end{equation}
\begin{equation}
    \begin{pmatrix}
        \pi^{1/2,0,1/2,0}(\mathbf{0}) & \pi^{1/2,0,0,1/2}(\mathbf{0})\\
        \pi^{0,1/2,1/2,0}(\mathbf{0}) & \pi^{0,1/2,0,1/2}(\mathbf{0})
    \end{pmatrix}
    = 2m
    \begin{pmatrix}
        I_2 & I_2 \\
        I_2 & I_2
    \end{pmatrix}
\end{equation}
By \eqref{ABCDK}, the the upper-left term and the lower-right term when the momentum varies equal a polynomial of degree $1$, while lower-left term and the upper-right term when the momentum varies equal polynomials of degree 0. And by the $T$ matrices for $(1/2,0,1/2,0)$ and $(0,1/2,0,1/2)$ computed in section 2, the expansion of zero momentum spin sums are:
\begin{equation}
    \pi^{1/2,0,1/2,0}(\mathbf{0}) = 2m\sigma^0, \quad \pi^{0,1/2,0,1/2}(\mathbf{0}) = 2m\bar{\sigma}^0
\end{equation}
so we have
\begin{equation}
    \begin{pmatrix}
        \pi^{1/2,0,1/2,0}(p) & \pi^{1/2,0,0,1/2}(p)\\
        \pi^{0,1/2,1/2,0}(p) & \pi^{0,1/2,0,1/2}(p)
    \end{pmatrix}
    = 2
    \begin{pmatrix}
        -p_\mu \sigma^\mu & m I_2 \\
        m I_2 & -p_\mu \bar{\sigma}^\mu
    \end{pmatrix}
\end{equation}

We abbreviate $\varphi = \psi^{1/2,0},\chi = \psi^{0,1/2}$, and the field equations \eqref{Field equation} write:
\begin{equation}
    \begin{aligned}
        \varphi = \frac{1}{2m} \pi^{1/2,0,0,1/2}(-i\partial) \varphi \\
        \varphi = \frac{1}{2m} \pi^{1/2,0,1/2,0}(-i\partial) \chi \\
        \chi = \frac{1}{2m} \pi^{0,1/2,1/2,0}(-i\partial) \chi \\
        \chi = \frac{1}{2m} \pi^{0,1/2,0,1/2}(-i\partial) \varphi
    \end{aligned}
\end{equation}
The first and the third line are trivial:
\begin{equation}
    \begin{aligned}
        \varphi &=  \varphi \\
        \chi &=  \chi
    \end{aligned}
\end{equation}
and the other two lines are:
\begin{equation}
    \begin{aligned}
        m \varphi &= i \sigma^\mu \partial_\mu \chi \\
        m \chi &= i \bar{\sigma}^\mu \partial_\mu \varphi
    \end{aligned}
\end{equation}
They are exactly Weyl equations for massive particles.

The second example is massive spin-$1$ particles with field rep $(\frac{1}{2},\frac{1}{2})$. 
The corresponding field equation is what we called Proca equation, also called the massive Maxwell equations. 

The field rep $(\frac{1}{2},\frac{1}{2})$ is not directly equal, but is equivalent to the vector representation (or called canonical representation) $D(\Lambda)=\Lambda$ of $SO^+(1,3)$. We take this form. 

The spin-$1$ rep of $su(2)$ is defined as:
\begin{equation}
    J_z \mapsto \begin{pmatrix}
        1 & 0 & 0\\
        0 & 0 & 0\\
        0 & 0 & -1
    \end{pmatrix}
    \quad\quad\quad
    J_+ \mapsto \begin{pmatrix}
        0 & \frac{1}{\sqrt{2}} & 0\\
        0 & 0 & \frac{1}{\sqrt{2}}\\
        0 & 0 & 0
    \end{pmatrix}
    \quad\quad\quad
    J_- \mapsto \begin{pmatrix}
        0 & 0 & 0\\
        \frac{1}{\sqrt{2}} & 0 & 0\\
        0 & \frac{1}{\sqrt{2}} & 0
    \end{pmatrix}
\end{equation}
and the vector representation maps everything to itself:
\begin{equation}
    J_z = -i \begin{pmatrix}
        0 & 0 & 0 & 0\\
        0 & 0 & 1 & 0\\
        0 & -1 & 0 & 0\\
        0 & 0 & 0 & 0
    \end{pmatrix}
    \quad\quad\quad
    J_\pm = \frac{-i}{\sqrt{2}} \begin{pmatrix}
        0 & 0 & 0 & 0\\
        0 & 0 & 0 & \mp i\\
        0 & 0 & 0 & 1\\
        0 & \pm i & -1 & 0
    \end{pmatrix}
\end{equation}
Under the vector representation, the eigenvectors of z-axis angular-momentum is:
\begin{equation}
    e_0 = \begin{pmatrix}
        0\\
        0\\
        0\\
        1
    \end{pmatrix}
    \quad\quad\quad
    e_+ = J_+ e_0 = -\frac{1}{\sqrt{2}} \begin{pmatrix}
        0\\
        1\\
        i\\
        0
    \end{pmatrix}
    \quad\quad\quad
    e_- = J_- e_0 = \frac{1}{\sqrt{2}} \begin{pmatrix}
        0\\
        1\\
        -i\\
        0
    \end{pmatrix}
\end{equation}
So the initial value of the coefficients is:
\begin{equation}
    u(0) = \begin{pmatrix}
        0 & 0 & 0\\
        -\frac{1}{\sqrt{2}} & 0 & \frac{1}{\sqrt{2}}\\
        -\frac{i}{\sqrt{2}} & 0 & -\frac{i}{\sqrt{2}}\\
        0 & 1 & 0
    \end{pmatrix}
\end{equation}
and the initial value of the spin sum is:
\begin{equation}
    \pi(0) = 2m \begin{pmatrix}
        0 & 0 & 0 & 0\\
        0 & 1 & 0 & 0\\
        0 & 0 & 1 & 0\\
        0 & 0 & 0 & 1
    \end{pmatrix}
\end{equation}
We can show that
\begin{equation}
    {}^{vec,vec}_{1}(T^{\mu\rho})^{\nu\sigma} = g^{\mu\sigma}g^{\nu\rho}
\end{equation}
Indeed we should verify:
\begin{equation}
    \Lambda^\mu{}_\nu (T^{ab})^\nu{}_\rho (\Lambda^{-1})^\rho{}_\sigma = \Lambda_\rho{}^a \Lambda_\nu{}^b (T^{\rho\nu})^\mu{}_\sigma
\end{equation}
This is equivalent to
\begin{equation}
    \Lambda^\mu{}_\nu g^{a\nu} g^b{}_\rho \Lambda_\sigma{}^\rho = \Lambda_\rho{}^a \Lambda_\nu{}^b g^{\mu\rho}g^\nu{}_\sigma
\end{equation}
which is obvious. Also clearly we have
\begin{equation}
    {}^{vec,vec}_0 T = I_4
\end{equation}
So the initial value of the spin sum is decomposed as:
\begin{equation}
    \pi(0) = 2m[I_4 - \begin{pmatrix}
        1 & 0 & 0 & 0\\
        0 & 0 & 0 & 0\\
        0 & 0 & 0 & 0\\
        0 & 0 & 0 & 0
    \end{pmatrix}]
    = 2m [ {}^{vec,vec}_0(T)^{\circ\circ} - {}^{vec,vec}_1 (T^{\mu\rho})^{\circ\circ} ]
\end{equation}
So the field equation is 
\begin{equation}
    (I_4 - \frac{\partial_\mu \partial_\rho}{m^2}{}^{vec,vec}_1(T^{\mu\rho})^{\circ\circ}) B_\circ = B_\circ
\end{equation}
It is equivalent to:
\begin{equation}
    \partial_\mu \partial_\rho g^{\mu\sigma}g^{\nu\rho} B_\sigma = 0
\end{equation}
So the field equation becomes:
\begin{equation}
    \partial^\nu \partial^\mu B_\mu = 0
\end{equation}
Taking another $\partial_\nu$ and using Klein-Gordan equation, we have:
\begin{equation}\label{Gauge}
    \partial^\mu B_\mu = 0
\end{equation}
This, combined with Klein-Gordan equation, is equivalent to the famous Proca equation:
\begin{equation}
    \partial_\mu(\partial^\mu B^\nu - \partial^\nu B^\mu) + m^2 B^\nu = 0
\end{equation}
\begin{rmk}
Notice that for vector field of massless particles with helicity-$1$, \eqref{Gauge} is a gauge fixing condition we artificially add to the field, while in the massive case it is a field equation that must be satisfied. So the vector field for massive spin-$1$ particles do not have gauge invariance. 
\end{rmk}


\section{Conclusions}

In this article we proved spin statistics for arbitrary massive $(A,B)$ field. The quantity 'spin sum' $\pi^{AB}(p)$ is also used to derive the massive field equations, after a little modification. Klein-Gordan equation is just a condition of 4-momentum in this context. 
Explicit calculation shows that (Massive) Weyl equation, Dirac equation and Proca equation are all examples of our general field equation. 

In the case of $(A,B,C,D) = (j,0,0,j)$ or $(A,B,C,D) = (0,j,j,0)$, the polynomial of the twisted spin sum is homogeneous, so the field equation contains a constant term plus a homogeneous part, thanks to the condition on $K$. But in the general case it may not be homogeneous. 

In this article we just proved the 'existence' part of the polynomial of the spin sum, what remains undiscovered is an explicit formula for the coefficients $\xi_{K}^{ABCD}$. Weinberg in \cite{weinberg1969feynman} discussed its explicit formula. 

Also in the massless case the proof in this article cannot be directly used. One solution is to derive and explicit expansion for the spin sum $\pi^{j,0}(p)$ and $\pi^{0,j}(p)$, as in \cite{weinberg1964feynman2}, and then pass to the $m\to 0$ limit. Another is to notice that $(A,B)$ fields can be constructed out of $(j,0)$ and $(0,j)$ fields. 

Also, Rarita-Schwinger equation remains undiscovered in this article.

\clearpage

\nocite{*}
\bibliographystyle{alpha}
\bibliography{ref.bib}

\begin{thebibliography}{Wei64b}

\bibitem[Fen23]{feng2023representation}
Zixuan Feng.
\newblock Representation theory in the construction of free quantum field.
\newblock {\em arXiv preprint arXiv:2302.13808}, 2023.

\bibitem[Tun67]{tung1967relativistic}
Wu-Ki Tung.
\newblock Relativistic wave equations and field theory for arbitrary spin.
\newblock {\em Physical Review}, 156(5):1385, 1967.

\bibitem[Wei64a]{weinberg1964feynman}
Steven Weinberg.
\newblock Feynman rules for any spin.
\newblock {\em Physical Review}, 133(5B):B1318, 1964.

\bibitem[Wei64b]{weinberg1964feynman2}
Steven Weinberg.
\newblock Feynman rules for any spin. ii. massless particles.
\newblock {\em Physical Review}, 134(4B):B882, 1964.

\bibitem[Wei69]{weinberg1969feynman}
Steven Weinberg.
\newblock Feynman rules for any spin. iii.
\newblock {\em Physical Review}, 181(5):1893, 1969.

\bibitem[Wei02]{weinberg2002quantum}
Steven Weinberg.
\newblock {\em The quantum theory of fields: Foundations}.
\newblock Cambridge University Press, 2002.

\end{thebibliography}

\end{document}